\title{A generalization of Hopcroft-Karp algorithm for semi-matchings and covers in bipartite graphs}
\titlerunning{A generalization of Hopcroft-Karp algorithm for semi-matchings and covers in bipartite graphs}  
\author{J\'an Katreni{\v c} \and Gabriel Semani{\v s}in}
\authorrunning{J\'an Katreni{\v c} and Gabriel Semani{\v s}in}   
\institute{
Institute of Computer Science,\\
P.J.~{\v S}af{\'a}rik University, Faculty of Science,\\
Jesenn\'a~5, 041 54 Ko{\v s}ice, Slovak Republic,\\
\email{jan.katrenic@upjs.sk, gabriel.semanisin@upjs.sk}}
\begin{document}

\maketitle 

\begin{abstract}

An $(f,g)$-semi-matching in a bipartite graph $G=(U \cup V,E)$ is a set of edges $M \subseteq E$ such that each vertex $u\in U$ is incident with at most $f(u)$ edges of $M$, and each vertex $v\in V$ is incident with at most $g(v)$ edges of $M$. 
In this paper we give an algorithm that for a graph with $n$ vertices and $m$ edges, $n\le m$, constructs a maximum $(f,g)$-semi-matching in running time $O(m\cdot \min (\sqrt{\sum_{u\in U}f(u)}, \sqrt{\sum_{v\in V}g(v)}) )$.
Using the reduction of \cite{WG2011}
% , from the optimal semi-matching problem to the maximum $(f,g)$-semi-matching 
our result on maximum $(f,g)$-semi-matching problem directly implies an algorithm for the optimal semi-matching problem with running time $O(\sqrt{n}m \log n)$.
\end{abstract}

\section{Introduction}

We consider finite non-oriented graphs without loops and multiple edges. In general we use standard  concepts and notation  of  graph theory. In particular,  $deg(u)$ denotes the degree of a vertex $u$ in $G=(V,E)$.
If $M\subseteq E$ then  $deg_M(u)$ denotes the number of edges of $M$ incident with $u$.
If $f$ is and integer valued function defined for all vertices of $G$ and $X\subseteq V$ then $f(X)$ stands for the sum $\sum_{v\in X} f(v)$. 

Let $G = (U \cup V, E)$ be a bipartite graph with $n = |U|+|V|$ vertices and $m = |E|$ edges (throughout the paper we consider only non-trivial case  with no isolated vertices, i.e. $n-1 \le m$). A {\em semi-matching} $M$ of $G$ is a set of edges $M\subseteq E(G)$, such that each vertex of $U$ is incident with exactly one edge of $M$.

Semi-matching is a natural generalization of the classical matching in bipartite graphs. Although the name of semi-matching was introduced recently in \cite{HarveyLLT06}, semi-matchings appear in many problems and were studied as  early as 1970s \cite{Horn} with applications in wireless sensor networks \cite{A3,A10,A6,A2,A1} and a wide area of scheduling problems \cite{Bruno,A5,A9,A7,A8}. For a weighted case of the problem we refer to \cite{ICALP10,A5,LeeLP09,HaradaYuta2007}.
% Recently, the semi-matchings were also studied distributed \cite{CzygrinowHKSW11} and 
% in on-line \cite{online1,online2}, 

The problem of finding an {\em optimal semi-matching} (see \cite{HarveyLLT06}) is motivated by the following off-line load balancing scenario: 
Given a set of tasks and a set of machines, each of which can process a subset of tasks. 
Each task requires one unit of processing time and must be assigned to some machine that can process it. The tasks have to be assigned in a manner that minimizes given optimization objective. One natural goal is to process all tasks with the minimum total completion time. Another goal is to minimize the average completion time, or total flow time, which is the sum of time units necessary for completion of all jobs (including the units while a job is waiting in the queue).

Let $M$ be a semi-matching. 
%Let us denote by $deg_M(v)$ the degree of a vertex $v \in U \cup V$ in the subgraph of a graph $G$ induced by $M$.
The {\em cost} of $M$, denoted by $cost(M)$, is defined as follows:
 $$cost(M) = \sum_{v \in V}  \frac{deg_M(v)\cdot(deg_M(v)+1)}2.$$

A semi-matching is {\em optimal}, if its  $cost$ is the smallest one among the costs of all admissible semi-matchings.
The problem of computing an optimal semi-matching was firstly studied by Horn \cite{Horn} and Bruno et al.\ \cite{Bruno} where an $O(n^3)$ algorithm was presented. The problem received considerable attention in the past few years. Harvey et al. \cite{HarveyLLT06} showed that by minimizing $cost$ of a semi-matching one minimizes simultaneously the maximum number of tasks assigned to a machine, the flow time and the variance of loads.
The same authors provided also a characterization of an optimal assignment based on cost-reducing paths and an algorithm for finding an optimal semi-matching in time $O(n\cdot m)$. It constructs an optimal semi-matching step by step starting with an empty semi-matching and in each iteration finds an augmenting path from a free $U$-vertex to a vertex in $V$ with the smallest possible degree.

The semi-matchings were generalized to the quasi-matchings by Bokal et al. \cite{Bresar09}. 
They consider an integer valued function $g$ defined on the vertex set and require that each vertex $v \in V$ is connected to at least $g(v)$ vertices of $U$.
 
An $(f,g)$-quasi-matching in a bipartite graph $G=(U \cup V,E)$ is a set of edges $M \subseteq E$ such that each vertex $u\in U$ is incident with at most $f(u)$ edges of $M$, and each vertex $v\in V$ is incident with at least $g(v)$ edges of $M$.
The authors provided a property of lexicographically minimum $g$-quasi-matching and showed that the lexicographically minimum $1$-quasi-matching equals to an optimal semi-matching. Moreover they also designed an algorithm to compute an optimal (lexicographically minimum) $g$-quasi-matching in running time $O(m \cdot g(V))$.

Similarly, in \cite{Bresar09} was defined an $(f,g)$-{\em semi-matching} of $G = (U \cup V, E)$, which 
is a set of edges $M\subseteq E$ such that every element $u$ of $U$ has at most $f(u)$ incident edges from $M$, and every element $v$ of $V$ has at most $g(v)$ incident edges from $M$. A {\em maximum $(f,g)$-semi-matching} is the one with as many edges as possible.

The complexity bound for computing an optimum semi-matching was further improved by Fakcharoenphol et al. \cite{ICALP10}, who presented $O(\sqrt{n} \cdot m \cdot \log n)$ algorithm for the optimal semi-matching problem. The algorithm uses a reduction to the min-cost flow problem and exploits the structure of the graphs and cost functions for an elimination of many negative cycles in a single iteration. 

Recently, in \cite{WG2011} it was presented a reduction from the optimum semi-matching problem to the maximum $(f,g)$-semi-matching, which shows that an optimal semi-matching of $G$
can be computed in time $O((n + m + T_{BDSM}(n, m)) \cdot \log{n})$ where $n = |U| + |V|$, $m = |E|$, and $T_{BDSM}(n, m)$ is the time complexity of an algorithm for computing a maximum $(f,1)$-semi-matching with $f(U) \le 2n$. 
By a result of \cite{MuchaS04}, the algorithm designed in \cite{WG2011} yields to a randomized algorithm for optimal semi-matching with a running time of $O(n^{\omega})$, where $\omega$ is the exponent of the best known matrix multiplication algorithm. Since $\omega \leq 2.38$, this algorithm broke through $O(n^{2.5})$ barrier for computing optimal semi-matching in dense graphs \cite{WG2011}.

In this paper we present an algorithm for finding a maximum $(f,g)$-semi-matching in running time 
$O(m\cdot \min\{\sqrt{ f(U)},  \sqrt{ g(V)} \})$. 
For the problem of computing an $(f,g)$-quasi-matching it gives an algorithm with running time $O(m \sqrt{g(V)})$.
% This gives the $O(m^{3/2})$ bound for the general $f,g$-semi-matching problem,
For the  maximum $(f,1)$-semi-matching we get an complexity upper bound $O(\sqrt{n} \cdot m)$, which implies a bound $O(\sqrt{n} \cdot m \cdot \log n)$ for computing an optimal semi-matching of the algorithm presented in \cite{WG2011}.

\section{Augmenting paths and $(f,g)$-semi-matchings}

In this chapter we introduce concepts that will be used throughout the remaining part of the paper.

\begin{definition} 
Let $f: U \rightarrow \mathbb{N}$ and $g: V \rightarrow \mathbb{N}$ be mappings.
An $(f,g)$-semi-matching in a bipartite graph $G=(U \cup V,E)$ is a set of edges $M \subseteq E$ such that $deg_M(u)\le f(u)$ for each vertex $u\in U$, and $deg_M(v)\le g(v)$ for each vertex $v\in V$.
\end{definition}

\begin{definition}
An $(f,g)$-semi-matching $M$ of a graph $G=(U \cup V ,E)$ is called {\em maximum}, if for each $(f,g)$-semi-matching $M'$ of $G$ holds $|M|\ge|M'|$.
An $(f,g)$-semi-matching $M$ is called {\em perfect}, if $|M|=f(U)$.
\end{definition}
Note, that $(1,1)$-semi-matching is a matching in a bipartite graph.

\begin{definition}
Let $G=(U \cup V,E)$ be a bipartite graph and $H\subseteq E$. A path $P$ is called an $H$-{\em alternating path}, if each internal vertex of $P$ is incident with exactly one edge of $H \cap P$.
\end{definition}

\begin{definition}
Let $G=(U \cup V,E)$ be a bipartite graph and $H\subseteq E$. An $H$-{\em augmenting path} $P$ is an alternating path with the first and last vertex of $P$ not incident with an edge of $H \cap P$.
\end{definition}

\begin{definition}
Let $G=(U \cup V,E)$ be a bipartite graph, $H\subseteq E$, $P$ be an $H$-alternating path and $E(P)$ be the edge set of $P$.
We define an operator $\oplus$ as follows:
$$H \oplus P =  (H \cup E(P)) \setminus (E(P) \cap H).$$
\end{definition}
 
The next theorem provides a characterisation of maximum $(f,g)$-semi-matching. 
 
\begin{theorem} \label{lemma1}
 Let $M$ and $M'$ be an $(f,g)$-semi-matching of a graph $G$, $|M'|>|M|$. Then there exists an $M$-augmenting path $P$ with endvertices 
 $u\in U, v\in V$, $deg_M(u) < f(u)$ and $deg_M(v) < g(v)$ such that $E(P)\subseteq M\cup M'$.
\end{theorem}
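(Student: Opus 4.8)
The plan is to use the classical symmetric-difference technique, adapted from ordinary matchings to the $(f,g)$-setting. First I would colour the edges of the symmetric difference: call an edge \emph{red} if it lies in $R=M\setminus M'$ and \emph{blue} if it lies in $B=M'\setminus M$. These sets are disjoint and $R\cup B\subseteq M\cup M'$. The quantity that drives the whole argument is, at each vertex $w$, the difference $deg_B(w)-deg_R(w)$; since the edges of $M\cap M'$ cancel, this equals $deg_{M'}(w)-deg_M(w)$. Summing over the part $U$ (each edge has exactly one endpoint there) gives $\sum_{u\in U}\bigl(deg_{M'}(u)-deg_M(u)\bigr)=|M'|-|M|>0$, and likewise for $V$.

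Next I would decompose the subgraph $R\cup B$ into alternating trails together with closed alternating trails. Concretely, at each vertex I pair up as many red--blue edge pairs as possible, leaving $|deg_R(w)-deg_B(w)|$ unpaired stubs; following these pairings routes the edges into trails whose internal vertices alternate one red and one blue edge, while the fully paired edges close up into cycles. The endpoints of the open trails occur precisely at vertices where the red and blue degrees differ, and such an endpoint carries a blue terminal edge exactly when $deg_B(w)>deg_R(w)$, i.e.\ when $deg_{M'}(w)>deg_M(w)$.

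Then comes the counting step. Each trail contributes to $|B|-|R|$ according to its two end-colours: closed trails and trails with one red and one blue end contribute $0$, a trail with two blue ends contributes $+1$, and one with two red ends contributes $-1$. Hence $\#(\text{blue,blue trails})-\#(\text{red,red trails})=|B|-|R|=|M'|-|M|>0$, so at least one trail $T$ has both terminal edges blue. Such a $T$ alternates blue/red, that is, $(M'\setminus M)$/$(M\setminus M')$ edges, so it is $M$-alternating with both end-edges outside $M$: an $M$-augmenting trail. Being blue--blue it has one more blue edge than red, hence odd length, so its two endpoints lie in different parts, one $u\in U$ and one $v\in V$. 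At each endpoint $w$ the unpaired blue stub forces $deg_{M'}(w)>deg_M(w)$; combined with the fact that $M'$ is an $(f,g)$-semi-matching ($deg_{M'}(u)\le f(u)$ and $deg_{M'}(v)\le g(v)$) this yields exactly the required slack $deg_M(u)<f(u)$ and $deg_M(v)<g(v)$.

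Finally I would turn $T$ into a genuine simple path. If $T$ repeats a vertex, I excise the closed alternating sub-trail lying between two of its visits; such a sub-trail has even length with alternating colours, so splicing the two surviving pieces preserves the red/blue alternation at the cut vertex as well as the parity of the length (hence the $U$--$V$ endpoints) and both blue terminal edges. Iterating gives a simple $M$-augmenting path $P$ with $E(P)\subseteq R\cup B\subseteq M\cup M'$ and the stated endpoints. The counting is routine once the decomposition is available; the two delicate points I expect to be the main obstacle are making the local red--blue pairing and the resulting trail decomposition precise, and verifying that excising closed sub-trails keeps the path alternating and keeps both blue ends, which is handled by the colour-parity bookkeeping just sketched.
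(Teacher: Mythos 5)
Your proof is correct, and it takes a genuinely different route from the paper. The paper argues by induction on $|M|$ (to reduce to the case $M\cap M'=\emptyset$), then fixes the deficiency set $A=\{v\in V: deg_M(v)<deg_{M'}(v)\}$, forms the sets $U_A,V_A$ of vertices reachable from $A$ by $M$-alternating paths beginning with an $M'$-edge, and derives a chain of edge-counting inequalities across the four blocks $U_A,U_B,V_A,V_B$ to conclude that some $u\in U_A$ satisfies $deg_M(u)<deg_{M'}(u)$; the augmenting path then exists by the very definition of $U_A$. You instead run the classical Berge/Hopcroft--Karp symmetric-difference argument, generalized to the degree-constrained setting: the local red--blue pairing at each vertex (a transition system) decomposes $M\triangle M'$ into alternating trails and closed trails, the end-colour count $\#(\text{blue,blue})-\#(\text{red,red})=|M'|-|M|>0$ produces a trail with two $M'\setminus M$ ends, the unpaired stubs give $deg_{M'}>deg_M$ at both endpoints (hence the slack via $deg_{M'}(u)\le f(u)$, $deg_{M'}(v)\le g(v)$, exactly as the paper uses implicitly), and bipartiteness guarantees that every excised closed sub-trail has even length, so shortcutting to a simple path preserves alternation, the blue terminal edges, and the $U$--$V$ endpoints. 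Both arguments are sound; yours avoids induction entirely, absorbs the edges of $M\cap M'$ without a separate reduction, and builds precisely the decomposition machinery that the paper needs again later (Theorem~\ref{edgedis} on $k$ edge-disjoint augmenting paths follows naturally from your trail decomposition), whereas the paper's reachability-set argument is shorter once its definitions are in place and foreshadows the BFS layering $L_0,L_1,\dots,L_t$ used by the algorithm. The two points you flagged as delicate -- making the pairing/trail decomposition precise, and checking that excision preserves alternation -- are indeed the only places where care is needed, and your colour-parity bookkeeping handles both.
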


\begin{proof}
We proceed by an induction on the size of $|M|$. Evidently, the assertion of the theorem is true for the smallest cases. Now, we may assume that $M \cap M' = \emptyset$, otherwise the assertion follows from the induction hypothesis. Let us put
$$A = \{ v\in V: deg_M(v) < deg_{M'}(v) \} .$$ 

Let $V_A$ be the set of vertices of $V$ for which there exists an $M$-alternating path starting in a vertex of $A$ with and edge of $M'$. Here a path of length $0$ is considered to be an $M$-alternating path, therefore $A\subseteq V_A$.

Let $U_A$ be the set of vertices of $U$ for which there exists an $M$-alternating path starting in a vertex of $A$ with an edge of $M'$. 

Let us put $V_B = V \setminus V_A$ and $U_B = U \setminus U_A$.
For sets  $X\subseteq U$ and $Y\subseteq V$ we introduce parameters $m(X,Y) = |E(G[X\cup Y]) \cap M|$ and
$m'(X,Y) = |E(G[X\cup Y]) \cap M'|$.

From the definition of $V_B$ we get $m(U_A,V_B) = 0$ and the definition of $U_A$ yields $m'(U_B,V_A) = 0$ (otherwise the existence of such an edge implies an existence of an $M$-alternating path starting at a vertex of $A$ by edge of $M'$). This is depicted on Figure~\ref{fig1}.

\begin{figure}[ht]
 \begin{center} 
 		\includegraphics[scale=0.8]{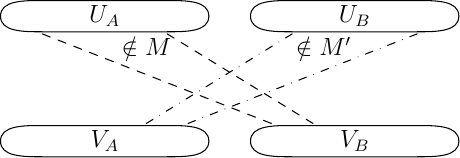} 
 \caption{The vertices of $G$ are divided into 4 parts. The edges between $U_B$ and $V_A$ cannot belong to $M'$, the edges between $U_A$ and $V_B$ cannot belong to $M$.}
\label{fig1}
 % proof.eps: 0x0 pixel, 300dpi, 0.00x0.00 cm, bb=195 341 416 450
\end{center}
\end{figure}

\noindent Since $|M| < |M'|$, we have $m(U,V) < m'(U,V)$. Moreover $m(U_A,V_B) = 0$ and $m'(U_B,V_A) = 0$ which gives
\begin{equation} \label{eq:1}
m(U_A,V_A) + m(U_B,V_A) + m(U_B,V_B) < m'(U_A,V_A) + m'(U_A,V_B) + m'(U_B,V_B)
\end{equation}

\noindent Since $A \cap V_B = \emptyset$ and $m(U_A,V_B) = 0$, we get the inequality
\begin{equation} \label{eq:2}
m(U_B,V_B) \ge m'(U_A,V_B) + m'(U_B,V_B).
\end{equation}

\noindent By \eqref{eq:1} and \eqref{eq:2} we get
\begin{equation}\label{eq:3}
m(U_A,V_A) + m(U_B,V_A) < m'(U_A,V_A). 
\end{equation}

\noindent Trivially, we have the following
\begin{equation}\label{eq:4}
 m(U_B,V_A) \ge - m'(U_A,V_B).
\end{equation}

\noindent Combining  \eqref{eq:3} and \eqref{eq:4} we obtain
\begin{equation} \label{eq:5}
m(U_A,V_A) < m'(U_A,V_A) + m'(U_A,V_B).
\end{equation}

From the inequality \eqref{eq:5} we can conclude that $U_A$ contains a vertex $u$ with $deg_M(u) < deg_{M'}(u)$. By the definition of $U_A$, it implies an existence of an $M$-augmenting path with endvertex $u$ and an endvertex from $A$.

\end{proof}

\begin{theorem} \label{thmmaximum}
A $(f,g)$-semi-matching $M$ of a graph $G = (U \cup V, E)$ is maximum if and only if there exists no $M$-augmenting path $P$ with endvertices $u\in U, v\in V$, $deg_M(u) < f(u)$ and $deg_M(v) < g(v)$.
\end{theorem}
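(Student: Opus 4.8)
This is an "if and only if" characterization, so I'll prove two directions. The forward direction (maximum $\Rightarrow$ no augmenting path) is the easy one, and I'd prove its contrapositive. The reverse direction (no augmenting path $\Rightarrow$ maximum) is also best handled by contrapositive, and here Theorem \ref{lemma1} does essentially all the work. Let me sketch both.

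For the forward direction, suppose $M$ admits an $M$-augmenting path $P$ with endvertices $u \in U$, $v \in V$ satisfying $deg_M(u) < f(u)$ and $deg_M(v) < g(v)$. I would consider $M \oplus P = (M \cup E(P)) \setminus (E(P) \cap M)$ and argue it is an $(f,g)$-semi-matching with $|M \oplus P| = |M| + 1$, contradicting maximality. The key observations: since $P$ is $M$-alternating and augmenting, it starts and ends with edges NOT in $M$, so it has one more non-$M$ edge than $M$-edge, giving the size increase of exactly one. For feasibility at internal vertices, each internal vertex is incident to exactly one edge of $M \cap P$ (by the alternating-path definition) which gets removed and one non-$M$ edge added, so its $M$-degree is unchanged and the constraint $deg(\cdot) \le f(\cdot)$ or $\le g(\cdot)$ still holds. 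At the two endpoints $u$ and $v$, only a single non-$M$ edge is added, raising each degree by exactly one; the strict inequalities $deg_M(u) < f(u)$ and $deg_M(v) < g(v)$ guarantee the new degrees still satisfy $\le f(u)$ and $\le g(v)$. Hence $M \oplus P$ is a larger feasible $(f,g)$-semi-matching.

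For the reverse direction, I'd prove the contrapositive: if $M$ is not maximum, there exists such an augmenting path. If $M$ is not maximum, there is an $(f,g)$-semi-matching $M'$ with $|M'| > |M|$. This is precisely the hypothesis of Theorem \ref{lemma1}, which directly yields an $M$-augmenting path $P$ with endvertices $u \in U$, $v \in V$ satisfying $deg_M(u) < f(u)$ and $deg_M(v) < g(v)$. So this direction is immediate from the previously established theorem.

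The main obstacle, such as it is, lies entirely in the forward direction's bookkeeping: verifying that the symmetric-difference operation preserves feasibility at every vertex. The subtle point is that $deg_M$ at internal vertices must be shown invariant under $\oplus$, which relies on the $M$-alternating property (exactly one $M$-edge of $P$ at each internal vertex is swapped out for a non-$M$-edge). The endpoint analysis then uses the strict degree inequalities in an essential way — without strictness, augmenting would violate $f$ or $g$. Since Theorem \ref{lemma1} is available, the reverse direction requires no real work, so the whole proof is short.
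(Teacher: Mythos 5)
Your proposal is correct and follows exactly the same route as the paper: the forward direction exhibits $M \oplus P$ as a strictly larger $(f,g)$-semi-matching (the paper states this as ``obvious,'' while you supply the degree bookkeeping), and the reverse direction is an immediate invocation of Theorem~\ref{lemma1}. No substantive difference.
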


\begin{proof} 
 Suppose to the contrary that there is a maximum $(f,g)$-semi-matching $M$ and  $M$-augmenting path $P$ with endvertices
 $u\in U, v\in V$ and  $deg_M(u) < f(u)$, $deg_M(v) < g(v)$. Then obviously
$M\oplus P$ is an $(f,g)$-semi-matching with $|M\oplus P| > |M|$.

The opposite direction comes from Theorem~\ref{lemma1}.

\end{proof}

The next theorem provides more information about the structure of augmenting paths.

\begin{theorem} \label{edgedis}
Let $M$ and $M'$ be $(f,g)$-semi-matchings of a bipartite graph $G$ such that $|M'|-|M|= k >0$. Then there exist $k$ edge-disjoint $M$-augmenting paths $P_1,P_2,\dots, P_k$ such that $M\oplus P_1 \oplus \dots \oplus P_k = M'.$
\end{theorem}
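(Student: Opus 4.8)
The plan is to argue by induction on $k = |M'| - |M|$, peeling off one augmenting path at a time with the help of Theorem~\ref{lemma1}. Since every edge of $M \cap M'$ lies on no $M$-augmenting path and is never altered by the operator $\oplus$, I would first reduce to the case $M \cap M' = \emptyset$: the common edges are simply carried along and reinserted at the end, so that the identity to be proved concerns only the symmetric difference $M \oplus M' = M \cup M'$. The base case $k = 1$ is then exactly Theorem~\ref{lemma1} together with the observation that augmenting raises the size by one.

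For the inductive step I would apply Theorem~\ref{lemma1} to the pair $(M, M')$ to obtain a single $M$-augmenting path $P_1$ with endvertices $u \in U$, $v \in V$ satisfying $deg_M(u) < f(u)$, $deg_M(v) < g(v)$, and $E(P_1) \subseteq M \cup M'$. Set $M_1 = M \oplus P_1$. The first checks are routine: $M_1$ is again an $(f,g)$-semi-matching (the two endpoints have spare capacity and all interior degrees are preserved by the alternation), and since $P_1$ alternates between $M'$-edges and $M$-edges while beginning and ending with an $M'$-edge, it carries exactly one more $M'$-edge than $M$-edge, whence $|M_1| = |M| + 1$ and $|M'| - |M_1| = k - 1$.

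The heart of the argument is the identity
$$M_1 \oplus M' = (M \oplus M') \setminus E(P_1),$$
which I would verify edge by edge: every $M$-edge of $P_1$ leaves $M_1$ and is absent from $M'$, while every $M'$-edge of $P_1$ enters $M_1$ and already lies in $M'$, so each edge of $P_1$ drops out of the symmetric difference and nothing else moves. Applying the induction hypothesis to $(M_1, M')$ then yields $k-1$ edge-disjoint $M_1$-augmenting paths $P_2, \dots, P_k$ whose edge sets together exhaust $(M \oplus M') \setminus E(P_1)$ and satisfy $M_1 \oplus P_2 \oplus \dots \oplus P_k = M'$. Because these edges avoid $E(P_1)$, the paths $P_2, \dots, P_k$ are edge-disjoint from $P_1$; and since $M$ and $M_1$ coincide off $E(P_1)$, a path disjoint from $P_1$ is $M$-alternating precisely when it is $M_1$-alternating, so $P_2, \dots, P_k$ are in fact $M$-augmenting. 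Telescoping the operator identities gives $M \oplus P_1 \oplus \dots \oplus P_k = M'$, completing the induction.

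The step I expect to be the main obstacle is guaranteeing that Theorem~\ref{lemma1} can be reapplied cleanly, i.e. that the extracted path always lies inside the \emph{current} symmetric difference so that peeling it off genuinely shrinks $M \oplus M'$ without stranding edges. The delicate case is a vertex over-used by $M$ relative to $M'$ (where $deg_M > deg_{M'}$): such a vertex carries surplus $M$-edges that can never be the endpoint of an augmenting path, and I would need to confirm that the alternation forced by Theorem~\ref{lemma1} routes every such surplus edge into the \emph{interior} of some $P_i$ rather than leaving it uncovered. Making this routing precise — equivalently, certifying that $M \oplus M'$ decomposes into exactly $k$ augmenting paths and nothing else — is the crux on which the statement rests, and it is where the high degrees permitted by $f$ and $g$ make the situation genuinely more subtle than in ordinary bipartite matching.
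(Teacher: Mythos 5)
Your inductive machinery is mostly sound, and in fact your argument is the same peeling strategy as the paper's (the paper inducts on the size of $G$ and deletes $E(P)$ from the graph to force edge-disjointness, where you induct on $k$ and re-base at $M_1=M\oplus P_1$, recovering disjointness from the telescoping identity $M_1\oplus M'=(M\oplus M')\setminus E(P_1)$, which is correct). The genuine gap is the base case $k=1$, which you dispose of in one sentence. Theorem~\ref{lemma1} gives \emph{some} $M$-augmenting path $P$ with $E(P)\subseteq M\cup M'$, but the conclusion $M\oplus P=M'$ forces $E(P)$ to be the \emph{entire} symmetric difference of $M$ and $M'$, and nothing guarantees that this symmetric difference is a single path. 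Indeed it need not be, and the statement you are trying to prove is false as written. Take $f\equiv g\equiv 1$, $U=\{u_1,u_2,u_3\}$, $V=\{v_1,v_2,v_3\}$, $M=\{u_1v_1,\,u_2v_2\}$, $M'=\{u_1v_2,\,u_2v_1,\,u_3v_3\}$ and $E=M\cup M'$. Here $k=1$ and $M\cap M'=\emptyset$. Edge-disjointness forces any family $P_1,\dots,P_k$ with $M\oplus P_1\oplus\dots\oplus P_k=M'$ to satisfy $E(P_1)\cup\dots\cup E(P_k)=M\cup M'$, and since each augmenting path carries exactly one more $M'$-edge than $M$-edges, such a decomposition must consist of exactly $k=1$ path. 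But $M\cup M'$ is the disjoint union of the alternating $4$-cycle $u_1v_1u_2v_2$ and the edge $u_3v_3$, which is not a path. (Concretely, Theorem~\ref{lemma1} here yields $P=u_3v_3$, and $M\oplus P\neq M'$ --- exactly the failure of your base case.) The obstacle you flagged in your closing paragraph, certifying that $M\oplus M'$ decomposes into exactly $k$ augmenting paths ``and nothing else'', is therefore not a step to be completed: it is false, because size-neutral pieces of the symmetric difference (alternating cycles, and alternating paths with equally many $M$- and $M'$-edges) contribute nothing to $|M'|-|M|$ and cannot be covered by the required number of augmenting paths.

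You should know that the paper's own proof founders on the same rock, just in a different place: its induction, once the size difference reaches zero, implicitly asserts that the two residual edge sets coincide, and in the example above this would require $\{u_1v_1,u_2v_2\}=\{u_1v_2,u_2v_1\}$ after peeling $P=u_3v_3$. So your attempt and the paper's proof share both the strategy and the defect. What is true, and what the running-time analysis in Section~3 actually needs, is a weaker conclusion: if $|M'|-|M|=k>0$ there exist $k$ edge-disjoint $M$-augmenting paths all of whose edges lie in $M\cup M'$ (hence of total length at most $|M\cup M'|$); the equality $M\oplus P_1\oplus\dots\oplus P_k=M'$ must be dropped, or weakened to agreement up to size-neutral alternating components, in the spirit of Berge's symmetric-difference decomposition. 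Even that repair requires some care in the $(f,g)$-setting --- one must pair off, at every vertex, incident $M$-edges with incident $M'$-edges to build the decomposition and to keep the endpoints capacity-feasible --- so the high degrees permitted by $f$ and $g$ are, as you suspected, where the real work lies.
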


\begin{proof}
We prove the theorem by induction on the size of the graph $G$. The assertion obviously holds for the smallest possible cases.
If $M \cap M' \neq \emptyset$, then $G \setminus (M \cap M')$ and $M \setminus M'$, $M' \setminus M$ is an instance of theorem of smaller size and the claim follows from induction hypothesis.

Suppose now $M \cap M' = \emptyset$.
Using Theorem \ref{lemma1}, there exists an $M$-augmenting path $P$ such that its edges alternatively belongs to $M'$ and $M$. Therefore $|M' \setminus E(P)| - |M \setminus E(P) | = k - 1$ and $(M\oplus P) \cap E(P) = M' \cap E(P)$. 
Consider now the graph $G \setminus E(P)$ and edge sets $M\setminus E(P)$, $M' \setminus E(P)$. 
From the induction hypothesis there exist $k-1$ edge disjoint paths $P_1, \dots, P_{k-1}$ such that $(M \setminus E(P)) \oplus P_1 \oplus \dots P_{k-1} = (M' \setminus E(P))$.
Clearly, $P$ is edge disjoint with $P_1,\dots, P_{k-1}$ and

\noindent
\begin{eqnarray*}
M' & = & (M' \cap E(P)) \cup (M' \setminus E(P))\\
   & = & ((M\oplus P) \cap E(P) ) \cup ((M \setminus E(P)) \oplus P_1 \oplus \dots P_{k-1})\\
   & = & M \oplus P_1 \oplus \dots P_{k-1} \oplus P.
\end{eqnarray*}
% and the proof is complete.

\end{proof}

% \begin{corollary} \label{vertexdis}
%  Let $M$ and $M'$ be $(1,1)$-semi-matchings (matchings) of a bipartite graph $G$ such that $|M'|>|M|=k>0$. Then there exists $k$ vertex disjoint $M$-augmenting paths $P_1,\dots,P_k$ such that $P_1\cup \dots \cup P_k \subseteq M\cup M'$.
% \end{corollary}
% 
% \begin{proof}
%  According to Theorem \ref{edgedis} there exists $k$ edge-disjoint paths which can simultaneously extend $M$ and use only edges of $M\cup M'$. Clearly, no two of those paths may overlap in a vertex $u\in U \cup V$, since $deg_M(v)\le 1$, for each $v \in V$. \qed
% \end{proof}

\begin{proof}
%  According to Theorem \ref{edgedis} there exists $k$ edge-disjoint paths which can simultaneously extend $M$ and use only edges of $M\cup M'$. Clearly, no two of those paths may overlap in a vertex $u\in U \cup V$, since $deg_M(v)\le 1$, for each $v \in V$. \qed
% \end{proof}

\begin{corollary} \label{semidis1}
 Let $M$ and $M'$ be an $(f,1)$-semi-matchings of a bipartite graph $G$ such that $|M'|-|M|=k>0$. Then there exist $k$ $M$-augmenting paths $P_1,P_2,\dots, P_k$ such that $M' = M \oplus P_1 \oplus \dots \oplus P_k$ and 
$E(P_i) \cap E(P_j) = \emptyset$, for each $i,j \in \{1,2,\dots, k\}, i \neq j$.
\end{corollary}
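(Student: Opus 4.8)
The plan is to read the corollary off Theorem~\ref{edgedis} and then extract the sharper disjointness that the constraint $g\equiv 1$ supplies. First I would note that an $(f,1)$-semi-matching is exactly an $(f,g)$-semi-matching with $g(v)=1$ for every $v\in V$, so the hypotheses of Theorem~\ref{edgedis} hold verbatim for $M$ and $M'$. That theorem already delivers $k$ edge-disjoint $M$-augmenting paths $P_1,\dots,P_k$ with $M\oplus P_1\oplus\dots\oplus P_k=M'$, which is precisely the asserted conclusion; no new induction is required, and the work is entirely in recognizing the special case.

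The feature worth isolating when $g\equiv 1$ is that such paths are automatically pairwise disjoint on their $V$-vertices. To establish this I would first recall, from Theorem~\ref{lemma1} and the construction inside the proof of Theorem~\ref{edgedis}, that every edge of every $P_i$ belongs to $M$ or to $M'$ and that along each path the edges alternate between $M'$ and $M$. Now fix any $v\in V$. Since both $M$ and $M'$ are $(f,1)$-semi-matchings, $deg_M(v)\le 1$ and $deg_{M'}(v)\le 1$, so at most one edge of $M$ and at most one edge of $M'$ is incident with $v$. If $v$ is interior to some $P_i$, its two path-edges are one $M$-edge and one $M'$-edge and thus exhaust all edges of $M$ and $M'$ at $v$; if $v$ is the $V$-endvertex of some $P_i$, then $deg_M(v)=0$, so the single path-edge at $v$ is the only edge of $M\cup M'$ meeting $v$. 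In either case edge-disjointness leaves no edge at $v$ for a second path, so no vertex of $V$ can lie on two of the $P_i$.

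The one place where the argument stops --- and the reason the clean statement is edge-disjointness rather than full vertex-disjointness --- is the $U$-side, which I expect to be the main obstacle to any stronger claim. A vertex $u\in U$ with $f(u)>1$ may legitimately carry several edges of $M$ and of $M'$, so distinct augmenting paths can pass through $u$ along distinct edges without ever sharing an edge; vertex-disjointness therefore cannot be asserted on $U$. Consequently the correct and sufficient output in the $(f,1)$ case is a family of $M$-augmenting paths that is edge-disjoint and $V$-vertex-disjoint, which is exactly the property consumed by the later phase-counting argument underlying the $O(m\sqrt{f(U)})$ bound.
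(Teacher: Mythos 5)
Your proposal is correct and takes essentially the same route as the paper: the paper also reads the corollary off Theorem~\ref{edgedis} specialized to $g\equiv 1$, and then notes in one line that $deg_M(v)\le 1$ for $v\in V$ prevents two of the paths from overlapping in a vertex of $V$. Your interior-vertex/endvertex case analysis (and the remark that nothing stronger holds on the $U$-side) simply spells out the details of that observation.
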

It  follows from Theorem \ref{edgedis} and the fact $deg_M(v) \le 1$, $v \in V$ that no two of those $M$-augmenting paths may overlap in a vertex $v\in V$. 
\end{proof}

Let $M$ be an $(f,g)$-semi-matching of a bipartite graph $G=(U \cup V,E)$. 
Denote by $V^g_M = \{ v\in V: deg_M(v) < g(v)\}$. We set $adist_M(x)$ to be the length of a shortest $M$-alternating path starting in any vertex of $V^g_M$ and ending in $x$. If no such $M$-alternating path exists, we put $adist_M(x)=+\infty$.

\begin{theorem}\label{nondecrease}
 Let $M$ be an $(f,g)$-semi-matching of a bipartite graph $G=(U \cup V,E)$ and $P$ be a shortest $M$-augmenting path.
 Then $adist_M(x) \le adist_{M\oplus P}(x)$ for each vertex $x\in {U\cup V}$.
\end{theorem}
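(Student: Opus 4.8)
The plan is to argue by contradiction in the style of the Hopcroft--Karp monotonicity lemma. Write $M' = M\oplus P$ and observe first that augmenting along $P$ changes exactly the two endpoints' degrees, each increasing by one; hence the only vertex of $V$ that can leave $V^g$ is the $V$-endpoint of $P$, so $V^g_{M'}\subseteq V^g_M$. Now suppose the claim fails, and choose a vertex $x$ violating it with $adist_{M'}(x)$ as small as possible, say $adist_{M'}(x)=d$, so that $adist_M(x)>d$. Since $V^g_{M'}\subseteq V^g_M$, every vertex of $V^g_{M'}$ has $adist_M=0$, so $x\notin V^g_{M'}$ and $d\ge 1$.

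Next I would analyse the last edge of a shortest $M'$-alternating path. Let $Q$ be a shortest $M'$-alternating path from $V^g_{M'}$ to $x$, let $y$ be the vertex preceding $x$ on $Q$, and let $e=yx$ be the last edge. Then $adist_{M'}(y)=d-1<d$, so by the minimality of $x$ the vertex $y$ is not a violator and $adist_M(y)\le d-1$. Because $G$ is bipartite and every alternating path considered here leaves its source on a non-matching edge, the membership of $e$ inside $Q$ is forced by the side of $y$: if $y\in U$ then $e\in M'$, and if $y\in V$ then $e\notin M'$. The same side of $y$ dictates the type of edge needed to prolong an $M$-alternating path out of $y$, namely a matching edge if $y\in U$ and a non-matching edge if $y\in V$. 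Consequently, if $e\notin E(P)$ then $e$ has the same membership in $M$ as in $M'$, so appending $e$ to a shortest $M$-alternating path reaching $y$ produces an $M$-alternating path (walk) to $x$ of length at most $adist_M(y)+1\le d$, contradicting $adist_M(x)>d$.

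The remaining, and genuinely hard, case is $e\in E(P)$: here the flip along $P$ reverses the membership of $e$, so the type that $Q$ uses is exactly the wrong one for extending an $M$-alternating path, and the naive argument breaks. The plan for this case is to exploit that $P$ is a \emph{shortest} augmenting path. When $e\in E(P)$ the vertices $x$ and $y$ are consecutive on $P$, so $x$ lies on $P$, and I intend to combine the short $M'$-route $Q$ with a suffix of $P$ running to the deficient $U$-endpoint of $P$, producing an $M$-augmenting walk from $V^g_M$ to a deficient vertex whose length is strictly below $|E(P)|$, contradicting the minimality of $P$. The main obstacle is precisely this splicing: $Q$ is $M'$-alternating and may itself run along edges of $P$, so one cannot treat it directly as $M$-alternating. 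I would handle this by cutting $Q$ at the first vertex $z$ where it meets $P$ — the portion of $Q$ before $z$ avoids $E(P)$ and is therefore simultaneously $M$- and $M'$-alternating — and then checking that the alternation matches at the junction $z$ (again forced by the side of $z$) so that concatenation with the appropriately oriented tail of $P$ is a legitimate shorter augmenting path after deleting repeated vertices. Verifying that the spliced object is a genuine augmenting path of length below $|E(P)|$, together with the symmetric subcase $y\in V$, is the crux of the argument.
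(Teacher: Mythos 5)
Your setup is sound and agrees with the paper's: the observation $V^g_{M\oplus P}\subseteq V^g_M$, the choice of a violator $x$ minimizing the new distance (in fact cleaner than the paper's choice, which minimizes $adist_M$), the predecessor $y$ with $adist_M(y)\le d-1$, and the complete treatment of the case $e\notin E(P)$ all match the paper's argument. The problem is the case $e\in E(P)$, which you yourself flag as ``the crux'' and leave unverified; since this is precisely the case the lemma hinges on, the proposal has a genuine gap. Moreover, the specific splice you propose does not deliver the bound you need: cutting $Q$ at the \emph{first} vertex $z$ where it meets $P$ and continuing along the tail of $P$ yields an $M$-alternating walk of length $q+\bigl(|E(P)|-\mathrm{pos}_P(z)\bigr)$, where $q$ is the position of $z$ on $Q$ and $\mathrm{pos}_P(z)$ its position on $P$, and nothing forces $q<\mathrm{pos}_P(z)$: the path $Q$ may wander off $P$ for a long stretch and first touch $P$ at a vertex near $P$'s starting end, in which case the spliced walk is \emph{longer} than $P$ and no contradiction with the minimality of $P$ arises. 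Making this route work would require first proving that $\mathrm{pos}_P(z)=adist_M(z)$ for every vertex $z$ of $P$ (itself needing a walk-to-path shortcutting argument) and then propagating estimates edge by edge along $Q$, a substantially longer argument than the one you sketch.

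The paper closes this case with no surgery on $Q$ at all. Since $e=yx\in E(P)$, its status in $M$ is the opposite of its status in $M\oplus P$, and the latter is forced by the side of $y$; for instance, when $x\in U$ and $y\in V$ the edge is unmatched in $M\oplus P$, hence $yx\in M$. Now look at how $P$ itself traverses this edge: $P$ is an alternating path starting in $V$, so every edge of $P$ traversed from a $V$-vertex to a $U$-vertex is unmatched and every edge traversed from $U$ to $V$ is matched; thus $yx\in M$ forces $P$ to traverse the edge from $x$ to $y$, i.e.\ to visit $x$ before $y$. On the other hand, $adist_M(y)<adist_M(x)$ forces the shortest augmenting path $P$ to visit $y$ before $x$ --- a contradiction, and the symmetric parity argument disposes of the case $x\in V$, $y\in U$. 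This traversal-direction trick is the idea missing from your attempt: it turns the hard case into a short parity contradiction instead of the delicate path-splicing you were trying to carry out.
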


\begin{proof}
Assume to the contrary that there exists at least one vertex $x$ such that $adist_M(x)>adist_{M\oplus P}(x)$. Let us choose such a vertex $x$ with the smallest possible value of $adist_M(x)$. It means that for each vertex $y$ with $adist_M(y)<adist_M(x)$ the inequality $adist_M(y) \le adist_{M\oplus P}(x)$ is valid.

Clearly $adist_{M\oplus P}(x)$ cannot be $0$, because in such a case $x$ is a vertex of $V$ for which $deg_{M\oplus P}(x)< g(x)$ and that is why $adist_M(x)$ must be zero as well.

Thus, $adist_{M\oplus P}(x)$ is at least $1$. Let $y$ be the predecessor of $x$ in a shortest $(M\oplus P)$-alternating path starting in a vertex of $V^g_{M\oplus P}$. Obviously $adist_{M\oplus P}(y)+1 = adist_{M\oplus P}(x)$. It also holds that $adist_M(y) \le adist_{M\oplus P}(y)$ (otherwise $x$ was not chosen correctly), what together with the previous equation gives $adist_M(y)<adist_{M\oplus P}(x)$. 
Together with the initial inequality for $y$  we obtain $adist_M(y)< adist_{M\oplus P}(x) < adist_{M}(x)$.
This implies that the edge $xy$ was changed, i.e. $xy \in P$ (otherwise the edge $xy$ could be used to violate the inequality $adist_M(v) > adist_{M\oplus P}(v)$). Let us distinguish  now two cases:

\smallskip
\noindent
Case1. $x\in U$ and $y\in V$. As $y$ is the predecessor of $x$ in an $(M\oplus P)$-alternating path starting at $V^g_{M\oplus P}$, it implies that the edge $yx \notin M\oplus P$ and $yx \in M$. 
Now let us consider the path $P$. The path $P$ was the shortest $M$-alternating path starting at $V^g_{M}$.
Since $adist_{M}(y) < adist_{M}(x)$ and $xy\in P$ the path $P$ must visit the vertex $y$ before $x$. However, in such a case, by the definition of an alternating path starting at $V$, the edge going from $V$ to $U$  must be unmatched, a contradiction.

\smallskip
\noindent
Case 2. $x\in V$ and $y\in U$. As $y$ is a predecessor of $x$ in an $(M\oplus P)$-alternating path started at $V^g_{M \oplus P}$, it implies that $yx \notin M\oplus P$, consequently $yx \notin M$.
The path $P$ was the shortest $M$-alternating path started at $V^g_M$. Since $adist_{M}(y) < adist_{M}(x)$ and $xy \in P$ the path $P$ must first visit the vertex $y$ and then $x$. However, in such a case, from the definition of an alternating path starting at $V$, the edge going from $V$ to $U$ must be matched, a contradiction
\end{proof}

\section{The algorithm for finding a maximum $(f,g)$-semi-matching}

In this section we describe an algorithm for solving  the following problem:
\begin{problem}
 Given a bipartite graph $G=(U \cup V,E)$ and two mappings
$f: U \rightarrow \mathbb{N}$ and $g: V \rightarrow \mathbb{N}$.
 Find a maximum $(f,g)$-semi-matching of $G$.
\end{problem}

In order to simplify the notation, for an $(f,g)$-semi-matching $M$ of a bipartite graph $G = (U \cup V, E)$ and for each vertex of $u\in U \cup V$ we introduce the parameter $c_M(u)$ as follows: 
\begin{equation*}
c_M(u)= 
\begin{cases} 
f(u) - deg_M(u) & \text{if $u\in U$,} \\
g(u) - deg_M(u) & \text{if $u\in V$.}
\end{cases}
\end{equation*}

We denote by $M_{f,g}$-augmenting path an $M$-augmenting path with endvertices $u \in U$, $v\in V$, such that $c_M(u)>0$ and $c_M(v)>0$.

Our algorithm applies the same scheme as the well-known algorithm of Hopcroft-Karp \cite{HopcroftK73}. We start with an empty $(f,g)$-semi-matching $M$ and in each iteration we extend $M$ by several augmenting paths. The length of a shortest $M_{f,g}$-augmenting path increases after each iteration and each iteration of the algorithm consumes $O(m)$ time.

\begin{figure}[ht]
 \centerline{\includegraphics[scale=0.95]{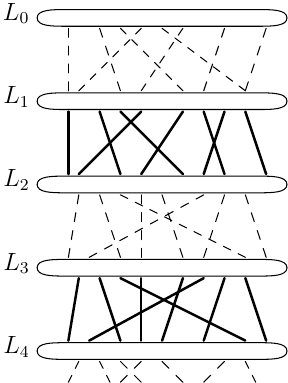}}
 \caption{The vertices of $G$ classified into layers}
 \label{fig2}
\end{figure}

One iteration of the algorithm finds a smallest number $t$ for which an $M_{f,g}$-augmenting path of length $t$ exists. Next, the algorithm extends $M$ by several augmenting paths in a single iteration, while there is an augmenting path of length $t$. More precisely:

\begin{enumerate}
 \item Let $L_0 = \{v \in V: c_M(v) > 0\}.$

\item 
In terms of  Breadth-First Search algorithm, classify vertices of $G$ into layers $L_1,L_2,\dots,L_n$ such that 
$L_i = \{v \in U\cup V: adist_M(v) = i \}$. This can be implemented as follows:\\
For each $i =0,2,4,\dots,2\lfloor n/2 \rfloor$ do\\
$L_{i+1} = \{u \in U: u\notin L_0,\dots,L_{i-1} \wedge \exists v\in L_i: uv \notin M  \}$\\
$L_{i+2} = \{v \in V: v\notin L_0,\dots,L_{i-1} \wedge \exists u\in L_{i+1} : uv \in M \}$\\

\item Let $t>0$ be a smallest odd number such that there exists $u\in L_t : c_M(u)>0$.
If no such $t$ exists, by Theorem~\ref{thmmaximum} there is no $M_{f,g}$-augmenting path. The algorithm stops and $M$ is a maximum $(f,g)$-semi-matching, otherwise continues by step 4.

\item For each vertex $u\in L_t$ while $c_M(u)>0$ do:\\
\begin{itemize} 
  \item[(i)] Find arbitrary $M_{f,g}$-augmenting path $P$ of length $t$ starting in $u$ such that $V(P) \subseteq L_0,L_1,\dots,L_t$.
  \item[(ii)] If such a path $P$ exists, set $M:= M\oplus P$ and recalculate values of $c_M$ along the path $P$.
\end{itemize}
\end{enumerate}

\begin{theorem} \label{increase}
 The length of the shortest augmenting path increases after each iteration of the algorithm.
\end{theorem}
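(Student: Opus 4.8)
The plan is to follow the Hopcroft--Karp / Dinic phase paradigm, with Theorem~\ref{nondecrease} as the monotonicity engine. Fix one iteration; let $M$ be the $(f,g)$-semi-matching at its start, let $t$ be the length of a shortest $M_{f,g}$-augmenting path (the smallest odd $t$ admitting $u\in L_t$ with $c_M(u)>0$), and let $M^{*}$ be the semi-matching at the end of the iteration. First I would note that every path the iteration augments along is a \emph{shortest} augmenting path: each has length exactly $t$, while an easy induction using Theorem~\ref{nondecrease} shows the shortest length never drops below $t$ during the iteration, so each length-$t$ path used is shortest at the moment it is used. Applying Theorem~\ref{nondecrease} once per augmentation then gives $adist_{M^{*}}(x)\ge adist_{M}(x)$ for every vertex $x$. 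Since an augmentation lowers only the slack $c$ of its two endpoints and leaves internal degrees unchanged, the sets of free $V$-vertices and free $U$-vertices only shrink; hence any $U$-vertex still free under $M^{*}$ had $adist_{M}\ge t$ and now has $adist_{M^{*}}\ge t$, so the shortest augmenting path with respect to $M^{*}$ has length at least $t$.

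It remains to exclude length exactly $t$, which is the heart of the matter. Suppose to the contrary that $M^{*}$ has an augmenting path of length $t$; by the bound just proved it is a shortest one, say $Q=x_{0}x_{1}\cdots x_{t}$ with $adist_{M^{*}}(x_{i})=i$. I would run a counting argument in the spirit of the Dinic level-graph proof. The edges on which $M$ and $M^{*}$ differ are exactly those flipped along the iteration's augmenting paths, and each such edge joins two consecutive layers $L_{j},L_{j+1}$ of the fixed ($M$-)layering; granting this, the standard alternating-BFS inequalities (split according to whether the traversed edge is matched, exactly as in the case analysis of Theorem~\ref{nondecrease}) show that, going from $x_{0}$ towards $x_{t}$, each \emph{changed} edge lowers $adist_{M}$ by exactly $1$ while each \emph{unchanged} edge raises it by at most $1$. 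Writing $c$ for the number of changed edges on $Q$ and summing along $Q$ yields $adist_{M}(x_{t})\le t-2c$; but $x_{t}$ is free under $M^{*}$, hence under $M$, so $adist_{M}(x_{t})\ge t$. Therefore $c=0$: the path $Q$ uses only unchanged edges, every step raises $adist_{M}$ by exactly $1$, so $x_{i}\in L_{i}$ and $Q$ lies entirely inside the layered graph.

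Finally I would extract the contradiction from the way the iteration drains the layered graph. A layered edge is usable in an augmenting path only in the single matched/unmatched state determined by its layers, and augmenting flips it out of that state and never restores it; hence the used paths are edge-disjoint and the stock of usable layered edges is only consumed. Consequently, once the algorithm finishes processing the terminal vertex $x_{t}\in L_{t}$ --- at which point $c(x_{t})$ has stopped changing (no other length-$t$ path from $L_t$ visits it internally) yet is still positive, so processing stopped because \emph{no} layered length-$t$ path from $x_{t}$ remained --- no later augmentation can recreate one. Since $Q$ is a layered length-$t$ augmenting path ending at $x_{t}$ built entirely from edges that were never flipped, it was already available at that moment, contradicting the stopping rule. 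I expect the delicate part to be precisely this bookkeeping across the many augmentations of a single iteration: keeping the \emph{fixed} $M$-layering synchronized with the \emph{evolving} matching, verifying that each augmented path meets every layer in a single vertex, and confirming that the changed edges are exactly the consecutive-layer flips --- the point on which the clean per-edge sign dichotomy, and hence the whole counting argument, rests.
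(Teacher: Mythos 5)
Your proposal is correct and follows the same Hopcroft--Karp skeleton as the paper: Theorem~\ref{nondecrease} supplies the monotonicity of $adist$ across the augmentations of one phase, and the contradiction is ultimately with the iteration's stopping rule. Where you diverge is in how much of the final contradiction you actually prove. The paper's own argument is much terser: it takes a hypothetical $M'$-augmenting path $P=v_0v_1\cdots v_t$ surviving the iteration, observes $adist_{M'}(v_i)\le i$ and $adist_M(v_i)\le adist_{M'}(v_i)$, concludes that all vertices of $P$ lie in $L_0\cup\cdots\cup L_t$, and then simply declares that ``such an augmenting path was not processed during the iteration,'' a contradiction. This elides two points that your write-up makes explicit: first, $P$ is augmenting with respect to the \emph{final} matching, so one must argue it would also have been an available augmenting path at the moment the relevant $L_t$-vertex was being processed --- your $adist_M(x_t)\le t-2c$ counting argument, forcing $c=0$ so that $Q$ consists entirely of unflipped edges lying along consecutive layers, does exactly this; second, the timing across the many augmentations of a single phase (a path to an already-processed $L_t$-vertex must not be recreatable), which your edge-consumption observation handles. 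So your proof is essentially a rigorous filling-in of the paper's sketch rather than a different route; the one step you flag as ``granted'' (that flipped edges join consecutive layers) closes by applying your own counting argument inductively to each path the iteration uses, since each such path has length exactly $t$ and runs from $L_0$ to $L_t$ inside the layered vertex set. Nothing in your outline is wrong; it just proves more than the paper bothers to.
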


\begin{proof}
An iteration which processes an $(f,g)$-semi-matching $M$ stops when there is no $M_{f,g}$-augmenting path consisting of vertices of $L_0 \cup L_1 \cup \dots \cup L_t$. It remains to prove, that after such an iteration there is no augmenting path of length $t$ in the graph $G$ (a path of length less than $t$ cannot appear due to Theorem~\ref{nondecrease} and the fact that all vertices in layers $L_1,L_2,\dots,L_{t-1}$ have zero capacity).

Suppose to the contrary, that after the iteration there is an $M'_{f,g}$-augmenting path $P = \{v_0, v_1, \dots, v_t\}$ of order $t$ in $G$. Since all the vertices of $V^g_{M'}$ are located in $L_0$, $v_0 \in L_0$. Since $P$ is an alternating path starting by a vertex of $L_0$, then $adist_{M'}(v_i) \le i$, for each $i=0,1,\dots,t$.
According to Theorem~\ref{nondecrease}, the value of $adist$ cannot decrease after iteration, i.e.\ $adist_{M}(v_i) \le adist_{M'}(v_i)$ for each $i=0,1,\dots,t$. Hence, each vertex of $P$ appears in $L_0 \cup L_1 \cup \dots L_t$ and such an augmenting path was not processed during the iteration of the algorithm, which is a contradiction.

% Since all vertices in layers $L_1,L_2,\dots,L_{t-1}$ have capacity of $c_M$ equal to $0$ and $c_{M'}(v) \le c_{M}(v)$, for each $v \in U \cup V$, then $adist_{M'}(v_t) = adist_{M}(v_t) = t$, i.e. the vertex $v_t \in L_t$.
% The case of
% $adist_{M'}(v_i) = adist_{M}(v_i) = i$, for each $i=0,1,\dots t$ leads to contradiction, because such a path is found during the iteration.
% In the oposite case we get an existence of an $M$-augmenting path of order less than $t$, a contradiction.
\end{proof}

\subsection{The running time}

Let $n$ be the number of vertices in a given graph $G$ and $m$ be the number of its edges, assume that $m\ge n-1$ since isolated vertices can be erased from the graph in linear time. 

The algorithm starts with an empty $(f,g)$-semi-matching $M$ and then iterates several times until at least one augmenting path is found. In the search loop, the algorithm classifies the vertices into layers $L_0, L_1,\dots,L_t$ and modifies $M$ by augmenting paths using vertices of $L_0, L_1, \dots, L_t$. 
% Since each vertex $v \in L_1,\dots,L_{t-1}$ has $c_M(v) = 0$, there cannot be an endvertex of an augmenting path.
This step consumes $O(m)$ time, since each edge is manipulated at most once during one iteration.  
No more iteration is performed whenever no augmenting path was found in the actual loop. 
% Clearly, the algorithm is finite, since the size of $M$ increases at least by one edge in each loop.

The key part of the complexity analysis is to enumerate the number of loops of the algorithm.
Let $s$ be the size of a maximum $(f,g)$-semi-matching $M^*$. After performing $\sqrt{s}$ iterations of the algorithm, according to Theorem~\ref{increase}, the shortest $M$-augmenting path consists of at least $\sqrt{s}$ vertices. According to Theorem~\ref{edgedis} there exist $s-|M|$ edge disjoint $M$-augmenting paths that can simultaneously extend $M$ to size $s$ and those paths consist only of edges of $M\cup M^*$. As each such a path must be of length at least $\sqrt{s}$ and $|M\cup M^*|$ is at most $2s$, these imply that $s-|M|\le 2\sqrt{s}$. 
% It means that after $\sqrt{s}$ loops of our algorithm we are close to a maximum semi-matching. 
Since in each loop the algorithm finds at least one augmenting path, the algorithm surely stops after at most $2\sqrt{s}$ loops.
Hence, the total number of performed loops is $O(\sqrt{s})$ and the algorithm runs in time $O(m \cdot \sqrt{s})$.

Moreover $s \le f(U)$ and $s \le g(V)$ and we get that the algorithm computes a maximum semi-matching in running time $O\left (m \cdot {\min\{\sqrt{f(U)}, \sqrt{g(V)}\}}\right)$.
For the case of $(f,1)$-semi-matching this gives the complexity upper bound $O(\sqrt{n} \cdot m)$.

To find an arbitrary $(f,g)$-quasi-matching  one can use the algorithm for maximum $(f,g)$-semi-matching problem which computes a maximum $(f,g)$-semi-matching $M$. Clearly, if $|M| < f(U)$ then no $(f,g)$-quasi-matching exists, otherwise $M$ is an $(f,g)$-quasi-matching. Moreover, for an $(f,g)$-quasi-matching we may assume $f(U) \ge g(V)$ (otherwise no $(f,g)$-quasi matching exists), we get the algorithm with running time 
$O(m \sqrt{g(V)})$.

\end{document}